\newtheorem{Theorem}{Theorem}[section]
\newtheorem{Lemma}[Theorem]{Lemma}
\newtheorem{Corollary}[Theorem]{Corollary}
\newtheorem{Example}[Theorem]{Example}
\numberwithin{equation}{section}
\begin{document}
\title{{\LARGE Improved AntiGriesmer Bounds for Linear Anticodes and Applications}}

\author{Guanghui~Zhang$^1$, Bocong Chen$^2$, Liren Lin$^3$ and Hongwei Liu$^4$\footnote{E-mail addresses:
{\it zghui@squ.edu.cn (G. Zhang); mabcchen@scut.edu.cn (B. Chen); l\underline{~}r\underline{~}lin86@163.com (L. Lin);
hwliu@ccnu.edu.cn (H. Liu).}}
}

\date{\small
$1.$ School of Mathematics and Physics, Suqian University, Suqian 223800, China\\
$2.$ School of Mathematics, South China University of Technology, Guangzhou 510641, China\\
$3.$ School of Cyber Science and Technology, Hubei University, Wuhan 430062, China\\
$4.$ School of Mathematics and Statistics, Central China Normal University, Wuhan 430079, China
}


\maketitle

\begin{abstract}
This paper improves the antiGriesmer bound for linear anticodes previously established by Chen and Xie (Journal of Algebra, 673 (2025) 304-320).
While the original bound required the code length to satisfy $n < q^{k-1}$ and the dual code to have minimum distance at least 3,
our main result removes the length restriction and relaxes the dual distance condition to at least 2.
Specifically, we prove that for any $[n,k]_q$
linear anticode $\mathcal{C}$ over $\mathbb{F}_q$ with diameter $\delta$ and $d(\mathcal{C}^\perp) \geq 2$, the inequality
\[
n \leq \sum_{i=0}^{k-1} \left\lfloor \frac{\delta}{q^i} \right\rfloor
\]
holds. This generalization significantly broadens the applicability of the antiGriesmer bound.
We derive several corollaries, including lower bounds on the diameter $\delta$ in terms of $n$ and $k$, upper bounds on the code length $n$,
and constraints on the dimension $k$. Applications to the construction and classification of linear codes with few weights are also discussed,
along with examples demonstrating that our new bound can
be sharper than previous ones. Our work unifies and extends earlier findings,
providing a more comprehensive framework for studying linear anticodes and their properties.

\medskip
\textbf{MSC:} 11T71; 14G50; 94B05; 94B65.
	
\textbf{Keywords:} AntiGriesmer bound, Linear function, Diameter, Linear anticode.
	
\end{abstract}
	
\section{Introduction}
Let $q$ be a prime power and $\mathbb{F}_q$ the finite field with $q$ elements.
Let $n$ be a positive integer and $\mathbb{F}_q^n$ be the vector space of all $n$-tuples over the finite field $\mathbb{F}_q$.
Any nonempty subset $\mathcal{C}$ of $\mathbb{F}_q^n$ is called \emph{a code} over $\mathbb{F}_q$.
A \emph{linear code} $\mathcal{C}$ of length $n$ and dimension $k$ over $\mathbb{F}_q$, denoted as a linear $[n,k]_q$ code,
is a $k$-dimensional subspace of $\mathbb{F}_q^n$.
The \emph{Hamming weight} $\mathrm{wt}(\mathbf{c})$ of a codeword $\mathbf{c} \in \mathcal{C}$ is the number of nonzero coordinates in $\mathbf{c}$.
The \emph{Hamming distance} between two codewords
$\mathbf{c}_1, \mathbf{c}_2\in \mathcal{C}$ is the number of coordinates at
which $\mathbf{c}_1$ and $\mathbf{c}_2$ differ.
We denote the Hamming distance by $d(\mathbf{c}_1, \mathbf{c}_2)$.
Note that for all codewords $\mathbf{c}_1, \mathbf{c}_2\in \mathcal{C}$,
$d(\mathbf{c}_1, \mathbf{c}_2)=\mathrm{wt}(\mathbf{c_1}-\mathbf{c_2})$.
Since $\mathcal{C}$ is linear, the \emph{minimum distance} $d(\mathcal{C})$ of $\mathcal{C}$ can be defined as
\[
d(\mathcal{C}) = \min \{ \mathrm{wt}(\mathbf{c}) \mid \mathbf{c} \in \mathcal{C}, \mathbf{c} \neq \mathbf{0} \}
= \min \{ d(\mathbf{c}_1, \mathbf{c}_2) \mid \mathbf{c_1}, \mathbf{c_2} \in \mathcal{C} ~\text{and}~ \mathbf{c_1} \neq \mathbf{c_2} \}.
\]
A linear code $\mathcal{C}$ of dimension $k$ and minimum distance $d$ is denoted by a linear $[n,k,d]_q$ code.
For a linear $[n,k,d]_q$ code $\mathcal{C}$, the
\emph{diameter} $\delta(\mathcal{C})$ of
 $\mathcal{C}$ is defined as
\[
\delta(\mathcal{C}) = \max \{ \mathrm{wt}(\mathbf{c}) \mid \mathbf{c} \in \mathcal{C} \}.
\]
Based on the context, we abbreviate $\delta(\mathcal{C})$ as $\delta$ and $d(\mathcal{C})$ as $d$, which will not cause confusion.
When the diameter of a linear code is concerned,
it is often termed  as  a linear $[n,k,\delta]_q$ anticode.

The vector space $\mathbb{F}_q^n$ has a natural inner product defined on it.
In particular, if $\mathbf{x}=(x_1, x_2,\cdots, x_n), \mathbf{y}=(y_1, y_2, \cdots, y_n)$ are in $\mathbb{F}_q^n$,
we define the \emph{inner product} of $\mathbf{x}$ and $\mathbf{y}$ by
$$\langle\mathbf{x}, \mathbf{y}\rangle=x_1y_1+x_2y_2+\cdots+x_ny_n.$$
For a linear $[n,k,d]_q$ code $\mathcal{C}$,
the \emph{dual code} $\mathcal{C}^\perp$ of $\mathcal{C}$ is given by
\[
\mathcal{C}^\perp = \{ \mathbf{x} \in \mathbb{F}_q^n \mid \langle\mathbf{x}, \mathbf{c}\rangle
 = 0 \text{ for all } \mathbf{c} \in \mathcal{C} \}.
\]
Notice that $\mathcal{C}^\perp$ is also a linear code. The generator matrix for $\mathcal{C}^\perp$ is called \emph{the parity check matrix} for $\mathcal{C}$.
A linear code $\mathcal{C}$ is called \emph{projective} if the minimum distance of its dual code satisfies $d(\mathcal{C}^\perp) \geq 3$.

While the minimum distance receives  the most attention for
determining error-correction capability,
the diameter (maximum weight) also holds significant meaning for both theoretical and practical reasons.
For example, the diameter is directly linked to the covering radius problem. Studying codes with specific diameters is essential for finding optimal covering codes, which are vital for data compression \cite{CHLL}.
The existence of a codeword with weight equal to the diameter means there is a potential undetectable error pattern of that maximum size,
which directly influences the calculation of the probability of undetected error,
a key performance metric in communication systems designed for high-reliability error detection \cite[Part I, Chapter 1]{MS1977}.
In code-based cryptosystems like the McEliece cryptosystem, the security often relies on the hardness of problems related to finding codewords of a certain weight (e.g., the Syndrome Decoding Problem). The entire weight distribution, bounded by the diameter, affects the complexity of known attacks. Understanding the diameter and the number of high-weight codewords is crucial for selecting secure code parameters and assessing the cryptographic strength of the system \cite{OS}.

Previous works on the diameter of linear anticodes have focused on the following aspects:

(1) Bounds for anticodes:

\begin{itemize}
\item The code-anticode bound (\cite{AAK2001, Delsarte}):
 If  $\mathcal{C}\subset \mathbb{F}_q^n$ is a code with minimum distance $d$
and $\mathcal{A}\subset \mathbb{F}_q^n$ is an anticode with diameter $d-1$, then
$$|\mathcal{C}||\mathcal{A}|\leq q^n.$$
The sphere-packing bound \cite{huffman2003} can be thought
 of as a special case of the code-anticode bound, see \cite{AAK2001, Delsarte}.

\item The famous Erd$\ddot{o}$s-Kleitman bound for a binary anticode (\cite{EKR1961, kleitman}):
If  $\mathcal{C}$ is a binary anticode of length $n$ and
 diameter $\delta$, then
$$|\mathcal{C}|\leq \sum_{i=0}^{\lfloor\frac{\delta}{2}\rfloor}{n \choose i}.$$

\item The antiGriesmer   bound for projective linear anticodes \cite{CX}:
Let $q$ be a prime power and $n$ be a positive integer satisfying $n<q^{k-1}$.
Let $\mathcal{C}\subset \mathbb{F}_q^n$ be a projective linear anticode of dimension $k$, that is, $d(\mathcal{C}^\perp)\geq 3$.
Then its maximum weight (diameter) $\delta$
satisfies
\begin{equation}\label{knownresult}
n\leq \sum_{i=0}^{k-1}\left\lfloor\frac{\delta}{q^i}\right\rfloor.
\end{equation}

\end{itemize}
In addition, there is a lower bound $\delta \geq \frac{2^{k-1}n}{2^k-1}$ for a binary linear projective anticode of
dimension $k$ and diameter $\delta$ \cite{FARR1973} and a Gilbert-like bound on linear anticodes \cite{Reddy}.

(2) The constructions of linear anticodes and related codes from linear anticodes.
\begin{itemize}
\item There have been many constructions of large linear anticodes with a fixed diameter, see \cite{AK1998, EKR1961, FARR1973, kleitman}.

\item From a similar idea to that of constructing Solomon-Stiffler codes, Farrell gave a construction of linear codes with
optimal parameters or near optimal parameters from linear anticodes, see \cite{FARR1973} and \cite[pp. 547-556]{MS1977}.

\item Optimal locally repairable codes were constructed from anticodes in \cite{Silberstein}.

\item From known projective linear anticodes, simplex complementary codes with optimal or almost optimal
minimum distances were constructed,
and many new optimal or almost optimal few-weight linear codes, such as linear codes with two,
three and four nonzero weights, were constructed, see \cite{CX}.

\end{itemize}

As shown above, the study of bounds for linear anticodes is a fundamental and critical area in coding theory.
Understanding the structure and bounds of
such codes is essential for various problems in coding theory,
including the construction of error-correcting codes,
the analysis of code-anticode pairs, and applications in combinatorics and finite geometry.

In particular, in \cite{CX}, Chen and Xie introduced a new lower bound on the
diameter of projective linear anticodes over finite fields, known as the \emph{antiGriesmer bound}.
They also introduced the concept of  simplex complementary codes and
showed how to construct optimal or near-optimal few-weight linear codes from known projective linear anticodes.
This result strengthens the classical Erd$\ddot{o}$s-Kleitman bound for binary projective
linear anticodes and has led to new constructions of few-weight codes and related combinatorial objects.
Their antiGriesmer bound, however, requires the code length to satisfy $n < q^{k-1}$ and the dual distance to be at least 3,
see \cite[Theorem 2.2]{CX} or Equation (\ref{knownresult}).

In this paper, we improve the main result of \cite{CX} by removing the restriction
on the code length and relaxing the condition on the dual distance
(see  Theorem \ref{mainresult} and \cite[Theorem 2.2]{CX} for comparison).
We prove a new antiGriesmer-type bound that holds for all linear codes with dual distance at least 2,
without any assumption on the length, see Theorem \ref{mainresult} at the end of this section.
This not only leads to sharper constraints on the parameters of
linear anticodes and new applications in coding theory,
but also significantly broadens the applicability of the antiGriesmer bound and opens the door to new applications in the classification and construction of linear codes with few weights.

Furthermore, we derive several corollaries from our main bound, including lower bounds on the diameter $\delta$ in terms of $n$ and $k$, and upper bounds on the code length $n$ in terms of $\delta$ and the minimum distance. These results refine and extend the earlier findings, such as in \cite{CX} and \cite{FARR1973}, offering new insights into the structure of linear anticodes.

Now the main result of the paper can be stated as follows.

\begin{Theorem}[antiGriesmer Bound]\label{mainresult}
Let \(\mathcal{C}\) be an \([n, k]\) linear code over \(\mathbb{F}_q\) with diameter \(\delta\). If the minimum distance of the dual code \(\mathcal{C}^{\perp}\) is at least $2$, then
\begin{equation}\label{mainequation}
n \leq \sum_{i=0}^{k-1} \left\lfloor \frac{\delta}{q^i} \right\rfloor.
\end{equation}
\end{Theorem}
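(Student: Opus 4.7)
My plan is to prove Theorem \ref{mainresult} by induction on the dimension $k$, using a subcode supported on a hyperplane attaining the diameter together with an averaging argument that pins down its diameter. The base case $k=1$ is immediate: $d(\mathcal{C}^\perp)\geq 2$ forces every coordinate of the generator vector to be nonzero, so every nonzero codeword has weight $n$; thus $\delta=n$ and $n\leq\lfloor\delta/q^0\rfloor=\delta$ holds with equality.

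For the inductive step, I would fix a generator matrix $G$ with columns $g_1,\ldots,g_n\in\mathbb{F}_q^k\setminus\{0\}$, which are nonzero by $d(\mathcal{C}^\perp)\geq 2$. Identifying codewords with $\mathbf{u}G$ for $\mathbf{u}\in\mathbb{F}_q^k$, the weight $\mathrm{wt}(\mathbf{u}G)$ equals the number of columns $g_i$ not lying on the hyperplane $\mathbf{u}^\perp$. Choosing $\mathbf{u}_0\neq 0$ that attains the diameter, I set $H=\mathbf{u}_0^\perp$ and $A=\{i:g_i\in H\}$, so $|A|=n-\delta$. The columns indexed by $A$ lie in $H\cong\mathbb{F}_q^{k-1}$ and generate a code $\mathcal{C}_A$ of length $a=n-\delta$ and dimension $k'\leq k-1$; these columns remain nonzero, so $d(\mathcal{C}_A^\perp)\geq 2$ and the induction hypothesis will apply to $\mathcal{C}_A$.

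The crux is controlling $\delta(\mathcal{C}_A)$. For any nonzero linear functional $\psi$ on $H$, there are exactly $q$ extensions $\widetilde{\psi}:\mathbb{F}_q^k\to\mathbb{F}_q$, and I consider the sum of the weights of the $q$ corresponding (necessarily nonzero) codewords of $\mathcal{C}$. Since $\widetilde{\psi}|_H=\psi$ for every extension, the columns in $H$ contribute $q\,\mathrm{wt}_{\mathcal{C}_A}(\psi)$ to this sum. For each column $g_j\notin H$, the value $\widetilde{\psi}(g_j)$ takes every element of $\mathbb{F}_q$ exactly once as $\widetilde{\psi}$ ranges over the $q$ extensions, so $g_j$ contributes exactly $q-1$. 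The total is therefore $q\,\mathrm{wt}_{\mathcal{C}_A}(\psi)+(q-1)\delta$, and bounding each of the $q$ individual weights by $\delta$ gives
\[
q\,\mathrm{wt}_{\mathcal{C}_A}(\psi) + (q-1)\delta \;\leq\; q\delta,
\]
so $\mathrm{wt}_{\mathcal{C}_A}(\psi)\leq\delta/q$, whence $\delta(\mathcal{C}_A)\leq\lfloor\delta/q\rfloor$.

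Applying the induction hypothesis to $\mathcal{C}_A$ and using the identity $\lfloor\lfloor\delta/q\rfloor/q^i\rfloor=\lfloor\delta/q^{i+1}\rfloor$, I then obtain
\[
n-\delta \;=\; a \;\leq\; \sum_{i=0}^{k'-1}\left\lfloor \frac{\delta(\mathcal{C}_A)}{q^i}\right\rfloor \;\leq\; \sum_{i=0}^{k-2}\left\lfloor \frac{\delta}{q^{i+1}}\right\rfloor \;=\; \sum_{j=1}^{k-1}\left\lfloor \frac{\delta}{q^j}\right\rfloor,
\]
and adding $\delta=\lfloor\delta/q^0\rfloor$ to both sides yields the claimed inequality. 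The main obstacle I expect is the averaging step that produces $\delta(\mathcal{C}_A)\leq\lfloor\delta/q\rfloor$; once that is in hand, the rest is bookkeeping. It is precisely this step where the weakened hypothesis $d(\mathcal{C}^\perp)\geq 2$ suffices: only nonzero columns are needed to preserve the dual distance condition in the residual code $\mathcal{C}_A$, avoiding the stronger projective assumption $d(\mathcal{C}^\perp)\geq 3$ used in \cite{CX}.
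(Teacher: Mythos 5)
Your proof is correct, and it reaches the bound by a route that is organized quite differently from the paper's, even though the two share the same combinatorial heart. The paper argues non-inductively: it greedily constructs functionals $\mathbf{a}_1,\dots,\mathbf{a}_k$, proves they are linearly independent (so that the sets $S_1,\dots,S_k$ partition $[n]$, which is where $d(\mathcal{C}^\perp)\geq 2$ enters), and establishes the decay $|S_i|\leq\lfloor |S_{i-1}|/q\rfloor$ by averaging the count $|\{j\in R_{i-2}:\mathbf{b}_t(\mathbf{g}_j)\neq 0\}|$ over the $q+1$ functionals $\mathbf{b}_t\in\{\mathbf{a}_{i-1}+t\mathbf{a}_i\}\cup\{\mathbf{a}_i\}$, using the maximality of $\mathbf{a}_{i-1}$ on $R_{i-2}$. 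You instead induct on $k$ via the hyperplane-section (residual) code $\mathcal{C}_A$ on $\ker\mathbf{u}_0$, and your key lemma --- $\delta(\mathcal{C}_A)\leq\lfloor\delta/q\rfloor$, proved by averaging the weight over the $q$ extensions of a functional $\psi$ on $H$ and using the global bound $\mathrm{wt}\leq\delta$ --- replaces the paper's Lemma 3.3, while the induction itself absorbs the roles of the paper's Lemmas 3.1 and 3.2 (linear independence and the partition). Your packaging is arguably cleaner and closer in spirit to the residual-code proof of the classical Griesmer bound; it also isolates a statement of independent interest (the diameter of the section of $\mathcal{C}$ by a diameter-attaining hyperplane drops by a factor of $q$), and it makes transparent exactly where $d(\mathcal{C}^\perp)\geq 2$ is needed, namely to keep the columns of $\mathcal{C}_A$ nonzero so the hypothesis propagates. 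The only points worth spelling out in a final write-up are the degenerate cases $A=\emptyset$ (where the inequality $n-\delta\leq\sum_{j\geq 1}\lfloor\delta/q^j\rfloor$ is trivial) and $k'<k-1$ (where you pad the sum with nonnegative terms, as you indicate); neither is a gap.
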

The proof of the above main result is provided in Section 3.
Combining with the well-known Griesmer bound (see \cite{Griesmer} or \cite[Theorem 2.7.4]{huffman2003}),
we have the following
upper and lower bounds on the code length $n$ (in the case where $k>1$
and $d(\mathcal{C}^{\perp})\geq2$)
$$
  \sum_{i=0}^{k-1} \left\lceil \frac{d}{q^i} \right\rceil\leq
  n \leq \sum_{i=0}^{k-1} \left\lfloor \frac{\delta}{q^i} \right\rfloor.
$$

\section{Preliminaries}
We first introduce some notation for later use. For a real number $x$,
the floor function, denoted as $\lfloor x\rfloor$, rounds the real number $x$ down to the nearest integer less than or equal to it,
i.e., $\lfloor x\rfloor$ is the largest integer that is not greater than $x$;
the ceiling function, denoted as $\lceil x\rceil$, rounds a real number up to the nearest integer greater than or equal to it,
i.e., $\lceil x\rceil$ is the smallest integer that is not less than $x$.
Let $S$ be a set, and denote the cardinality of $S$ by $|S|$;
let $S_0$ be a subset of $S$.
We denote the set of elements in $S$ but not in $S_0$
 by $S\backslash S_0$.

Let $\mathcal{C}$ be a linear $[n,k]_q$ code. A $k\times n$ matrix $G$ whose rows form a basis for $\mathcal{C}$
is called a \emph{generator matrix} for $\mathcal{C}$.
If $\mathcal{C}$ is an $[n,k]_q$ code with generator matrix $G$, then the codewords in $\mathcal{C}$ are
precisely the linear combinations of the rows of $G$.
Put another way,
$$\mathcal{C}=\{\mathbf{x}G\,|\,\mathbf{x}\in \mathbb{F}_q^k\},$$
where we regard $\mathbf{x}\in \mathbb{F}_q^k$ as a row vector.
Write $G=(\mathbf{g}_1, \mathbf{g}_2, \cdots, \mathbf{g}_n)$, that is, $\mathbf{g}_i$ is the $i$-th column of the matrix $G$,
where $i=1,2,\cdots,n$.
Then $\mathbf{x}\mathbf{g}_i$ is the $i$-th component of the codeword $\mathbf{x}G$ of $\mathcal{C}$.
Thus for any $\mathbf{x}\in \mathbb{F}_q^k$ the weight of the codeword $\mathbf{x}G$ is characterized by
$$\mathrm{wt}(\mathbf{x}G)=|\{i\,|\,\mathbf{x}\mathbf{g}_i\neq 0, 1\leq i\leq n\}|.$$

Let $\mathcal{C}$ be a linear $[n,k]_q$ code. As mentioned before,
the \emph{dual code} $\mathcal{C}^\perp$ of $\mathcal{C}$ is the set of all vectors in $\mathbb{F}_q^n$
that are orthogonal to every codeword in $\mathcal{C}$.
If $G$ is a generator matrix for $\mathcal{C}$, then
$$\mathcal{C}^\bot=\{\mathbf{x}\in \mathbb{F}_q^n|\mathbf{x}G^T=0\},$$
where $G^T$ denotes the transpose matrix of $G$.
Thus the dual code $\mathcal{C}^\bot$ of $\mathcal{C}$ is a linear $[n,n-k]_q$ code.
For any linear code $\mathcal{C}$, we have $(\mathcal{C}^\bot)^\bot=\mathcal{C}$.
Therefore, if $G$ and $H$ are generator and parity check matrices, respectively, for $\mathcal{C}$,
then $H$ and $G$ are generator and parity check matrices, respectively, for $\mathcal{C}^\bot$.

Let $\mathcal{C}$ be a linear $[n,k]_q$ code with the parity check matrix $H$.
Then $\mathcal{C}$ has minimum distance $d$ if and only if every $d-1$ columns of $H$ are linearly
independent and some $d$ columns are linearly dependent \cite[Theorem 10, Chapter 1]{MS1977}.
Therefore, we can reinterpret the projectivity of a linear code.
Recall that a linear code is called \emph{projective} if $d(\mathcal{C}^\perp) \geq 3$.
That is to say, a linear code $\mathcal{C}$ is called \emph{projective} if no two columns of its generator matrix are scalar multiples of each other,
or equivalently, any two columns of
the generator matrix of $\mathcal{C}$ are linearly independent.
For a linear code $\mathcal{C}$, if the minimum distance of the dual code $\mathcal{C}^\perp$ satisfies that $d(\mathcal{C^\perp})\geq 2$,
this means that the generator matrix $G$ for $\mathcal{C}$ has no zero column.

In the following, we review the definitions of the generalized Reed-Solomon code
and the extended generalized Reed-Solomon code, respectively, see \cite[Chapters 10 and 11]{MS1977}.

Let $\alpha_1, \alpha_2,\cdots,\alpha_n$ be distinct elements of $\mathbb{F}_q$
and $v_1,v_2,\cdots,v_n$ be nonzero elements of $\mathbb{F}_q$
(but not necessarily distinct).
Then the generalized Reed-Solomon code consists of all vectors
$$\big(v_1f(\alpha_1), v_2f(\alpha_2), \cdots,v_nf(\alpha_n)\big),$$
where $f(x)$ ranges over all polynomials of degree $<k$ with coefficients from $\mathbb{F}_q$.
Since $f(x)$ has at most $k-1$ zeros, the minimum distance is at least $n-k+1$, and hence is equal to $n-k+1$ from the Singleton bound \cite{huffman2003}.
Thus the generalized Reed-Solomon code is an $[n,k,n-k+1]_q$ linear code with the generator matrix as follows:
$$
\begin{pmatrix}
v_1 & v_2 & \cdots & v_n\\
v_1\alpha_1 & v_2\alpha_2 & \cdots & v_n\alpha_n\\
\vdots & \vdots & & \vdots\\
v_1\alpha_1^{k-1} & v_2\alpha_2^{k-1} & \cdots & v_n\alpha_n^{k-1}
\end{pmatrix}.
$$

The extended generalized Reed-Solomon code consists of all vectors
$$\big(v_1f(\alpha_1), v_2f(\alpha_2), \cdots,v_nf(\alpha_n), f_{k-1}\big),$$
where $f(x)$ ranges over all polynomials of degree $<k$ with coefficients from $\mathbb{F}_q$
and $f_{k-1}$ is the coefficient of $x^{k-1}$ in $f(x)$.
Then the extended generalized Reed-Solomon code is an $[n+1,k,n-k+2]_q$ linear code with the generator matrix as follows:
$$
\begin{pmatrix}
v_1 & v_2 & \cdots & v_n & 0\\
v_1\alpha_1 & v_2\alpha_2 & \cdots & v_n\alpha_n & 0 \\
\vdots & \vdots & \ddots & \vdots\\
v_1\alpha_1^{k-2} & v_2\alpha_2^{k-2} & \cdots & v_n\alpha_n^{k-2} & 0\\
v_1\alpha_1^{k-1} & v_2\alpha_2^{k-1} & \cdots & v_n\alpha_n^{k-1} & 1
\end{pmatrix}.
$$

Finally, we introduce a useful technique, the direct sum construction, that can be used to obtain new codes from old ones.
If $\mathcal{C}_1$ is a linear $[n_1,k_1,d_1]_q$ code  with
generator matrix $G_1$
and $\mathcal{C}_2$ is a linear $[n_2,k_2,d_2]_q$ code with
 generator matrix $G_2$,
then the linear code $\mathcal{C}$, denote   by $\mathcal{C}_1\oplus \mathcal{C}_2$, is the code
$$\mathcal{C}_1\oplus \mathcal{C}_2=\big\{(\mathbf{c}_1|\mathbf{c}_2)\,\big|\,\mathbf{c}_1\in \mathcal{C}_1, \mathbf{c}_2\in \mathcal{C}_2\big\}.$$
Clearly, $\mathcal{C}_1\oplus \mathcal{C}_2$ is a linear $[n_1+n_2, k_1+k_2, d=\min\{d_1,d_2\}]_q$ code with the generator matrix as follows:
$$\begin{pmatrix}
G_1 & 0\\
0 & G_2
\end{pmatrix}.
$$

\section{Main results and proofs}
Let $\mathcal{C}$ be an $[n,k,\delta]_q$ linear anticode over $\mathbb{F}_q$ and
$G$ be a $k\times n$ generator matrix for $\mathcal{C}$.
Let $\mathbf{g}_1,\mathbf{ g}_2,\cdots, \mathbf{g}_n$ be all the column vectors of $G$, that is,
$G=(\mathbf{g}_1, \mathbf{g}_2,\cdots,\mathbf{g}_n)$.
Assume that the minimum distance of the dual code \(\mathcal{C}^{\perp}\)
of the linear code $\mathcal{C}$ over $\mathbb{F}_q$ is at least $2$.
For $\mathbf{a}\in \mathbb{F}^k_q$, the $\mathbb{F}_q$-linear function $f_\mathbf{a}$ from $\mathbb{F}_q^k$ to $\mathbb{F}_q$ is defined to be
$$f_\mathbf{a}(\alpha)=\langle \mathbf{a}, \alpha \rangle, \forall~ \alpha\in \mathbb{F}_q^k.$$
For convenience, we simply denote this linear function by $\mathbf{a}$, i.e.,
$\mathbf{a}(\alpha)=f_\mathbf{a}(\alpha)= \langle \mathbf{a}, \alpha \rangle, \forall~ \alpha\in \mathbb{F}_q^k$. Denote by $(\mathbb{F}_q^k)^*$ the set of all the $\mathbb{F}_q$-linear functions from $\mathbb{F}_q^k$ to $\mathbb{F}_q$.

For code length $n$, we set $[n]=\{1,2,\cdots,n\}$.
Now we recursively construct linear functionals $\mathbf{a}_1, \mathbf{a}_2,\dots, \mathbf{a}_k \in (\mathbb{F}_q^k)^*$ and a partition of the index set
\[
[n] = S_1 \cup S_2 \cup \cdots \cup S_k,
\]
where some $S_i$ may be empty. Let $R_0=[n]$.

First, choose $\mathbf{a}_1 \neq 0$ maximizing $\mathrm{wt}(\mathbf{a}_1 G)$. Let
  \[
  S_1 = \{ j \in R_0 | \mathbf{a}_1(\mathbf{g}_j) \neq 0 \}, \quad |S_1| = \delta, \quad R_1 = [n] \setminus S_1.
  \]

Second, for $i \geq 2$, choose $\mathbf{a}_i \neq 0$ maximizing
  \[
  |S_i| = \left|\{ j \in R_{i-1} | \mathbf{a}_i(\mathbf{g}_j) \neq 0 \}\right|.
  \]

Define the remaining sets:
\[
\quad R_i = [n] \setminus (S_1 \cup \cdots \cup S_i).
\]

If $R_{i-1} = \emptyset$, set $S_i = \emptyset$.

\begin{Lemma}\label{linearlyindependent}
Keep the notation as above. If $R_{i-1}\neq \emptyset$ where $i\geq 2$, then $\mathbf{a}_1, \mathbf{a}_2, \cdots, \mathbf{a}_i$ are
linearly independent in the linear space
$(\mathbb{F}_q^k)^*$ over $\mathbb{F}_q$.
\end{Lemma}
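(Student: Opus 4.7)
The plan is to proceed by induction on $i$, using a single key observation: every index $j$ that survives into $R_{i-1}$ must satisfy $\mathbf{a}_\ell(\mathbf{g}_j)=0$ for every $\ell \le i-1$. This is immediate from the definition $S_\ell = \{ j \in R_{\ell-1} \mid \mathbf{a}_\ell(\mathbf{g}_j) \neq 0 \}$ together with the nested chain $R_{i-1}\subseteq R_{i-2}\subseteq\cdots\subseteq R_0=[n]$, which ensures every $j\in R_{i-1}$ lies in each intermediate $R_{\ell-1}$ and yet avoids each $S_\ell$. The other hypothesis to exploit is $d(\mathcal{C}^{\perp})\ge 2$, which by the discussion in Section~2 is exactly the statement that no column $\mathbf{g}_j$ of $G$ is zero; hence for each $j\in[n]$ there exists some $\mathbf{a}\in(\mathbb{F}_q^k)^*$ with $\mathbf{a}(\mathbf{g}_j)\ne 0$.

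For the base case $i=2$, assume for contradiction that $\mathbf{a}_1,\mathbf{a}_2$ are linearly dependent. Since both are nonzero, one has $\mathbf{a}_2 = c\,\mathbf{a}_1$ for some nonzero scalar $c$, whence $\mathbf{a}_2(\mathbf{g}_j)\ne 0$ iff $j\in S_1$; because $S_2\subseteq R_1=[n]\setminus S_1$, this forces $|S_2|=0$. But $R_1\ne\emptyset$ by hypothesis, so choose $j_0\in R_1$ and, using that $\mathbf{g}_{j_0}\ne\mathbf{0}$, pick a functional $\mathbf{a}$ with $\mathbf{a}(\mathbf{g}_{j_0})\ne 0$; this produces $|\{j\in R_1\mid \mathbf{a}(\mathbf{g}_j)\ne 0\}|\ge 1$, contradicting the maximality in the selection of $\mathbf{a}_2$. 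For the inductive step $i\ge 3$, since $R_{i-1}\subseteq R_{i-2}$, the set $R_{i-2}$ is also nonempty, so the inductive hypothesis gives that $\mathbf{a}_1,\ldots,\mathbf{a}_{i-1}$ are already independent. If $\mathbf{a}_1,\ldots,\mathbf{a}_i$ were dependent, then necessarily $\mathbf{a}_i=\sum_{\ell=1}^{i-1}c_\ell\mathbf{a}_\ell$; by the key observation $\mathbf{a}_i(\mathbf{g}_j)=0$ for every $j\in R_{i-1}$, so $|S_i|=0$, and the same no-zero-column argument applied to any $j_0\in R_{i-1}$ again produces a competitor $\mathbf{a}$ with $|S_i|\ge 1$, contradicting the maximal choice of $\mathbf{a}_i$.

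There is no serious technical obstacle; the argument is essentially a verification. The only delicate bookkeeping is to track the nested residual sets carefully enough to apply the defining property of each $S_\ell$ at each step. The heart of the proof is the rigidity produced by combining the greedy maximality built into the construction of the $\mathbf{a}_i$ with the no-zero-column property guaranteed by the assumption $d(\mathcal{C}^{\perp})\ge 2$.
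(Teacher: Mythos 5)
Your proof is correct and follows essentially the same route as the paper's: establish that every surviving index $j\in R_{i-1}$ kills all of $\mathbf{a}_1,\dots,\mathbf{a}_{i-1}$, deduce $|S_i|=0$ from a hypothetical dependence relation, and contradict the greedy maximality of $\mathbf{a}_i$ using the no-zero-column consequence of $d(\mathcal{C}^\perp)\ge 2$. The only cosmetic difference is that you start the induction at $i=2$ rather than at the trivial case $i=1$, and you are slightly more careful than the paper in noting that $R_{i-2}\supseteq R_{i-1}\neq\emptyset$ before invoking the inductive hypothesis.
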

\begin{proof}
Since $\mathbf{a}_1\neq 0$, $\mathbf{a}_1$ is linearly independent over $\mathbb{F}_q$.
Suppose that $\mathbf{a}_1, \mathbf{a}_2, \cdots, \mathbf{a}_{i-1}$ are linearly independent over $\mathbb{F}_q$.
In the following we need to prove that $\mathbf{a}_1, \mathbf{a}_2, \cdots, \mathbf{a}_{i}$ are linearly independent over $\mathbb{F}_q$.
Suppose that $\mathbf{a}_1, \mathbf{a}_2, \cdots, \mathbf{a}_{i}$ are linearly dependent over $\mathbb{F}_q$.
Then $\mathbf{a}_i$ can be expressed uniquely as a linear combination of $\mathbf{a}_1, \mathbf{a}_2, \cdots, \mathbf{a}_{i-1}$ with coefficients in $\mathbb{F}_q$:
$$\mathbf{a}_i=\lambda_1\mathbf{a}_1+\lambda_2\mathbf{a}_2+\cdots+\lambda_{i-1}\mathbf{a}_{i-1},$$
where $\lambda_j\in \mathbb{F}_q$ for $j=1,2,\cdots,i-1$.

From the definition of $R_{i-1}$ and the property $R_{i-1}\neq \emptyset$, we have the assertion as follows:
$$\mathbf{a}_1(\mathbf{g}_j)=\mathbf{a}_2(\mathbf{g}_j)=\cdots=\mathbf{a}_{i-1}(\mathbf{g}_j)=0, \forall ~j\in R_{i-1}.$$
If there exists an integer $m$ with $1\leq m\leq i-1$ such that $\mathbf{a}_m(\mathbf{g}_j)\neq 0$,
then from $j\in R_{i-1}\subseteq R_{m-1}$ we get that $j\in S_m$.
Thus, by the definition $R_{i-1}=[n]\setminus (S_1\cup\cdots \cup S_m\cup \cdots \cup S_{i-1})$ we must have $j \not\in R_{i-1}$,
which contradicts the condition $j\in R_{i-1}$. So the proof of the assertion is complete.

Now we calculate
$$\mathbf{a}_i(g_j)=\lambda_1\mathbf{a}_1(g_j)+\lambda_2\mathbf{a}_2(g_j)+\cdots+\lambda_{i-1}\mathbf{a}_{i-1}(g_j)=0, \forall ~j\in R_{i-1},$$
which shows that $S_i=\emptyset$ and thus we get that $|S_i|=0$.

On the other hand, according to the assumption that $R_{i-1}\neq \emptyset$ there is an integer $s\in R_{i-1}$.
Since the minimum distance of the dual code \(C^{\perp}\) is at least $2$, we have $\mathbf{g}_s\neq 0$.
Thus there exists a linear function $\mathbf{b}\in (\mathbb{F}_q^k)^*$ such that $\mathbf{b}(\mathbf{g}_s)\neq 0$.
By the definition of the set $S_i$ we obtain that
$$|S_i|\geq \left|\{j\in R_{i-1}|\mathbf{b}(\mathbf{g}_j)\neq0\}\right|\geq 1,$$
which contradicts the previous conclusion that $|S_i|=0$.

This completes the proof.
\end{proof}

\begin{Lemma}\label{partition}
Keep the notation as above. Then there is a disjoint partition of the column indices
$$[n]=S_1\cup S_2\cup \cdots \cup S_k.$$
Especially, we have
$$n=|S_1|+ |S_2|+ \cdots + |S_k|.$$
\end{Lemma}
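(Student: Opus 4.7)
The plan is to split the proof into two parts: disjointness of the $S_i$ and the claim that their union is all of $[n]$.

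Disjointness is immediate from the construction. For any $i\geq 2$, the set $S_i$ is extracted from $R_{i-1}=[n]\setminus(S_1\cup\cdots\cup S_{i-1})$, so $S_i$ is automatically disjoint from every earlier $S_j$ with $j<i$. Hence the $S_i$ are pairwise disjoint, and the union $S_1\cup S_2\cup\cdots\cup S_k$ is contained in $[n]$.

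For the reverse containment, the key is to show $R_k=\emptyset$, and I would argue by contradiction. Suppose $R_k\neq\emptyset$. Then the recursive construction can be continued one further step: picking any $s\in R_k$, the hypothesis $d(\mathcal{C}^{\perp})\geq 2$ guarantees $\mathbf{g}_s\neq\mathbf{0}$, so some linear functional $\mathbf{b}\in(\mathbb{F}_q^k)^*$ satisfies $\mathbf{b}(\mathbf{g}_s)\neq 0$. This certifies that a nonzero $\mathbf{a}_{k+1}$ maximizing $|\{j\in R_k\mid\mathbf{a}_{k+1}(\mathbf{g}_j)\neq 0\}|$ exists with $|S_{k+1}|\geq 1$. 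The inductive argument of Lemma \ref{linearlyindependent} then applies verbatim at index $i=k+1$ (since the only ingredients used in that proof are $R_{i-1}\neq\emptyset$ and $d(\mathcal{C}^\perp)\geq 2$), yielding that $\mathbf{a}_1,\mathbf{a}_2,\ldots,\mathbf{a}_{k+1}$ are linearly independent in $(\mathbb{F}_q^k)^*$. But $(\mathbb{F}_q^k)^*$ has dimension $k$ over $\mathbb{F}_q$, a contradiction. Therefore $R_k=\emptyset$, which means $[n]=S_1\cup S_2\cup\cdots\cup S_k$, and summing cardinalities gives $n=|S_1|+|S_2|+\cdots+|S_k|$.

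The proof presents no substantial obstacle; it is essentially a dimension-counting consequence of Lemma \ref{linearlyindependent}. The only point requiring care is to verify that the inductive step of Lemma \ref{linearlyindependent} remains valid when applied at index $k+1$, i.e.\ that its hypotheses are self-contained in the non-emptiness of $R_k$ and in $d(\mathcal{C}^{\perp})\geq 2$; both are available, so the argument transfers with no change.
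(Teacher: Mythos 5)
Your proof is correct and follows essentially the same route as the paper: disjointness from $S_i\subseteq R_{i-1}=[n]\setminus(S_1\cup\cdots\cup S_{i-1})$, and $R_k=\emptyset$ by dimension counting in $(\mathbb{F}_q^k)^*$ via Lemma \ref{linearlyindependent}. Your explicit remark that the construction must be extended one step to index $k+1$ (and that the hypotheses of Lemma \ref{linearlyindependent} are available there) is a detail the paper leaves implicit, but the argument is the same.
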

\begin{proof}
According to Lemma \ref{linearlyindependent} and $\dim_{\mathbb{F}_q}(\mathbb{F}_q^k)^*=k$, we must have $R_{k}=\emptyset$,
which implies that $$[n]=S_1\cup S_2\cup \cdots \cup S_k.$$

To show that $S_i\cap S_j=\emptyset$ for $1\leq i\neq j\leq k$, suppose that $i<j$.
From the definition of $R_i$ and $S_i$, we have that
$$S_j\subseteq R_{j-1}\subset R_i=[n]\setminus  (S_1\cup S_2\cup \cdots \cup S_i),$$
and the result follows.
\end{proof}

\begin{Lemma}\label{inequality-1}
With the notation as above. For $i \geq 2$, we have
\[
|S_i| \leq \left\lfloor \frac{|S_{i-1}|}{q} \right\rfloor.
\]
\end{Lemma}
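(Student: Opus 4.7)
The plan is to exploit the maximality condition that defined $\mathbf{a}_{i-1}$ on the set $R_{i-2}$, applied not to $\mathbf{a}_{i-1}$ itself but to the family of perturbations $\mathbf{b}_\lambda := \mathbf{a}_i + \lambda \mathbf{a}_{i-1}$ as $\lambda$ ranges over $\mathbb{F}_q$, followed by an averaging argument over $\lambda$. If $R_{i-1} = \emptyset$ then $S_i = \emptyset$ and the inequality is trivial, so I would assume $R_{i-1} \neq \emptyset$; by Lemma~\ref{linearlyindependent} this ensures $\mathbf{a}_{i-1}$ and $\mathbf{a}_i$ are linearly independent, so every $\mathbf{b}_\lambda$ is a nonzero element of $(\mathbb{F}_q^k)^*$.

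Next I would partition $R_{i-2} = S_{i-1} \cup R_{i-1}$ and evaluate, for each $\lambda$, the quantity $N(\lambda) := |\{j \in R_{i-2} : \mathbf{b}_\lambda(\mathbf{g}_j) \neq 0\}|$. On indices $j \in R_{i-1}$ one has $\mathbf{a}_{i-1}(\mathbf{g}_j) = 0$ (this is the defining property of $R_{i-1}$, as used in Lemma~\ref{linearlyindependent}), so $\mathbf{b}_\lambda(\mathbf{g}_j) = \mathbf{a}_i(\mathbf{g}_j)$ and the contribution to $N(\lambda)$ from $R_{i-1}$ equals $|S_i|$, independently of $\lambda$. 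On indices $j \in S_{i-1}$ we have $\mathbf{a}_{i-1}(\mathbf{g}_j) \neq 0$, so the equation $\mathbf{a}_i(\mathbf{g}_j) + \lambda \mathbf{a}_{i-1}(\mathbf{g}_j) = 0$ has exactly one solution $\lambda \in \mathbb{F}_q$; hence for each fixed $j \in S_{i-1}$ the functional $\mathbf{b}_\lambda$ vanishes on $\mathbf{g}_j$ for exactly one $\lambda$ and is nonzero on $\mathbf{g}_j$ for the remaining $q-1$ values.

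Summing over all $\lambda \in \mathbb{F}_q$ gives
\[
\sum_{\lambda \in \mathbb{F}_q} N(\lambda) \;=\; q \cdot |S_i| + (q-1)\cdot |S_{i-1}|.
\]
On the other hand, the maximality in the definition of $\mathbf{a}_{i-1}$ says that for every nonzero linear functional $\mathbf{b} \in (\mathbb{F}_q^k)^*$, the number of $j \in R_{i-2}$ with $\mathbf{b}(\mathbf{g}_j) \neq 0$ is at most $|S_{i-1}|$; applied to each $\mathbf{b}_\lambda$ this yields $N(\lambda) \leq |S_{i-1}|$, hence $\sum_\lambda N(\lambda) \leq q\,|S_{i-1}|$. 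Combining the two displays gives $q|S_i| \leq |S_{i-1}|$, and since $|S_i|$ is an integer the inequality sharpens to $|S_i| \leq \lfloor |S_{i-1}|/q \rfloor$.

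The main obstacle, and really the only non-routine step, is finding the right family of test functionals: applying maximality directly to $\mathbf{a}_i$ alone gives nothing useful, and one must recognize that the correct move is to perturb $\mathbf{a}_i$ by multiples of $\mathbf{a}_{i-1}$ and then average, so that the contribution from $R_{i-1}$ stays rigidly equal to $|S_i|$ while the contribution from $S_{i-1}$ is evenly smeared across $\mathbb{F}_q$. Once this is set up, the bookkeeping is immediate.
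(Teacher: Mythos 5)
Your proof is correct and follows essentially the same strategy as the paper: an averaging argument over the pencil of functionals spanned by $\mathbf{a}_{i-1}$ and $\mathbf{a}_i$, combined with the maximality of $\mathbf{a}_{i-1}$ on $R_{i-2}$ to force $q|S_i|\leq |S_{i-1}|$. The only (cosmetic) difference is that you average over the $q$ affine perturbations $\mathbf{a}_i+\lambda\mathbf{a}_{i-1}$, $\lambda\in\mathbb{F}_q$, whereas the paper averages over all $q+1$ points of the projective line $\mathbb{P}(\mathbb{F}_q)$ of the pencil; both yield the same inequality.
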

\begin{proof}
Let $\mathbb{P}(\mathbb{F}_q)=\mathbb{F}_q\cup \{\infty\}$ and consider the projective one-parameter family of linear functionals
\[
\mathbf{b}_t \in (\mathbb{F}_q^k)^*, \quad t \in \mathbb{P}(\mathbb{F}_q), \quad \mathbf{b}_t =
\begin{cases}
\mathbf{a}_{i-1} + t \, \mathbf{a}_i, & t \in \mathbb{F}_q, \\
\mathbf{a}_i, & t = \infty.
\end{cases}
\]
For $1\leq j\leq n$ and $t\in \mathbb{P}(\mathbb{F}_q)$,
let the symbol $\mathbf{1}_{\{\mathbf{b}_t(\mathbf{g}_j)\neq 0\}}$ be the indicator function (or characteristic function).
It means that it takes the value $1$ if the condition inside the curly braces is true, i.e., if $b_t(g_j)\neq 0$;
it takes the value $0$ if the condition is false, i.e., if $\mathbf{b}_t(\mathbf{g}_j)= 0$. That is to say,
$$\mathbf{1}_{\{\mathbf{b}_t(\mathbf{g}_j)\neq 0\}}=
\begin{cases}
1, & \mathbf{b}_t(\mathbf{g}_j)\neq 0, \\
0, & \mathbf{b}_t(\mathbf{g}_j)= 0.
\end{cases}
$$
Let $t$ run through $\mathbb{P}(\mathbb{F}_q)$ and consider
$$\sum_{t\in \mathbb{P}(\mathbb{F}_q)}\mathbf{1}_{\{\mathbf{b}_t(\mathbf{g}_j)\neq 0\}},$$
so we will proceed with a counting argument.
If $\mathbf{a}_i(\mathbf{g}_j)\neq 0$, then in this case we have
$$\mathbf{b}_t(\mathbf{g}_j)=0\Leftrightarrow \begin{cases}
\mathbf{a}_{i-1}(\mathbf{g}_j)+t\mathbf{a}_i(\mathbf{g}_j)=0, & t\in \mathbb{F}_q,\\
\mathbf{a}_i(\mathbf{g}_j)=0, & t=\infty.
\end{cases}
$$
Thus $\mathbf{b}_t(\mathbf{g}_j)=0$ if and only if $t=-\frac{\mathbf{a}_{i-1}(\mathbf{g}_j)}{\mathbf{a}_i(\mathbf{g}_j)}$,
which shows that there is a  unique $t$ such that $\mathbf{b}_t(\mathbf{g}_j)=0$.
If $\mathbf{a}_i(\mathbf{g}_j)=0$ and $\mathbf{a}_{i-1}(\mathbf{g}_j)\neq 0$, then
$$\mathbf{b}_t(\mathbf{g}_j)=0\Leftrightarrow \begin{cases}
\mathbf{a}_{i-1}(\mathbf{g}_j)=0, & t\in \mathbb{F}_q,\\
\mathbf{a}_i(\mathbf{g}_j)=0, & t=\infty.
\end{cases}
$$
Thus $\mathbf{b}_t(\mathbf{g}_j)=0$ if and only if $t=\infty$,
which shows that there is also a  unique $t$ such that $\mathbf{b}_t(\mathbf{g}_j)=0$.
Therefore, a counting argument yields
$$\sum_{t\in \mathbb{P}(\mathbb{F}_q)}\mathbf{1}_{\{\mathbf{b}_t(\mathbf{g}_j)\neq 0\}}=q,
~~~(\mathbf{a}_{i-1}(\mathbf{g}_j), \mathbf{a}_i(\mathbf{g}_j))\neq (0,0).$$

On the other hand, it is easy to see that when $(\mathbf{a}_{i-1}(\mathbf{g}_j),\mathbf{ a}_i(\mathbf{g}_j))= (0,0)$, we have
$$\mathbf{b}_t(\mathbf{g}_j)=0,~ {\hbox{for any}}~ t\in \mathbb{P}(\mathbb{F}_q).$$
This implies that
$$\sum_{t\in \mathbb{P}(\mathbb{F}_q)}\mathbf{1}_{\{\mathbf{b}_t(\mathbf{g}_j)\neq 0\}}=0,
~~~(\mathbf{a}_{i-1}(\mathbf{g}_j),\mathbf{ a}_i(\mathbf{g_j}))= (0,0).$$
In conclusion, the above counting argument yields
$$\sum_{t\in \mathbb{P}(\mathbb{F}_q)}\mathbf{1}_{\{\mathbf{b}_t(\mathbf{g}_j)\neq 0\}}=
\begin{cases}
q, & (\mathbf{a}_{i-1}(\mathbf{g}_j), \mathbf{a}_i(\mathbf{g}_j))\neq (0,0),\\
0, & (\mathbf{a}_{i-1}(\mathbf{g}_j),\mathbf{ a}_i(\mathbf{g}_j))= (0,0).
\end{cases}
$$

Now we consider the sum as follows:
$$\sum_{j\in R_{i-2}}\sum_{t\in \mathbb{P}(\mathbb{F}_q)}\mathbf{1}_{\{\mathbf{b}_t(\mathbf{g}_j)\neq 0\}}=\sum_{\substack{j\in R_{i-2},\\
(\mathbf{a}_{i-1}(\mathbf{g}_j), \mathbf{a}_i(\mathbf{g}_j))\neq (0,0)}}q
=q\big|\{ j\in R_{i-2}\big|(\mathbf{a}_{i-1}(\mathbf{g}_j), \mathbf{a}_i(\mathbf{g}_j))\neq (0,0)\}\big|.$$

By the definition of $S_i$ and $R_i$, we have
\begin{eqnarray*}
&&\big\{ j\in R_{i-2}|(\mathbf{a}_{i-1}(\mathbf{g}_j), \mathbf{a}_i(\mathbf{g}_j))\neq (0,0)\big\}\\[2pt]
&=& \big\{ j\in R_{i-2}|\mathbf{a}_{i-1}(\mathbf{g}_j)\neq 0\big\}\cup \big\{ j\in R_{i-2}|\mathbf{a}_{i-1}(\mathbf{g}_j)=0,
\mathbf{a}_i(\mathbf{g}_j)\neq 0\big\}\\[2pt]
&=&  S_{i-1}\cup \big\{ j\in R_{i-1}|\mathbf{a}_i(\mathbf{g}_j)\neq 0\big\}\\[2pt]
&=& S_{i-1}\cup S_i.
\end{eqnarray*}
Therefore, summing over all $j\in R_{i-2}$ yields the exact averaging identity
\begin{equation}\label{average}
\frac{1}{q+1}\sum_{j\in R_{i-2}}\sum_{t\in \mathbb{P}(\mathbb{F}_q)}\mathbf{1}_{\{\mathbf{b}_t(\mathbf{g}_j)\neq 0\}}
=\frac{q}{q+1}(\big|S_{i-1}\big|+ \big|S_i\big|).
\end{equation}
Fix $i \geq 2$ and by the definition of $S_{i-1}$ we have
\[
\big|S_{i-1}\big| = \max_{\mathbf{a} \in (\mathbb{F}^k_q)^*\setminus\{0\}} \big|\{ j \in R_{i-2} | \mathbf{a}(\mathbf{g}_j) \neq 0 \}\big|.
\]
Hence,
\begin{equation}\label{inequality-2}
\big|S_{i-1}\big|\geq \big|\{j\in R_{i-2}|\mathbf{b}_t(\mathbf{g}_j)\neq 0\big\}|, ~{\hbox{for any}} ~t\in \mathbb{P}(\mathbb{F}_q).
\end{equation}
Therefore, combining Equation (\ref{average}) with (\ref{inequality-2}) yields
$$
\big|S_{i-1}\big|\geq\frac{1}{q+1}\sum_{j\in R_{i-2}}\sum_{t\in \mathbb{P}(\mathbb{F}_q)}\mathbf{1}_{\{\mathbf{b}_t(\mathbf{g}_j)\neq 0\}}=\frac{q}{q+1}\big|(S_{i-1}\big|+ \big|S_i\big|),
$$
which shows that
$$
\big|S_{i-1}\big|\geq q\big|S_i\big|.
$$
Hence, we have
$$\big|S_i\big|\leq \frac{\big|S_{i-1}\big|}{q},$$
which implies the desired result.
\end{proof}

Now we are ready to prove Theorem \ref{mainresult}, which is the main result of this paper.

By construction, $|S_1| = \delta$. Repeated application of the inequality in Lemma \ref{inequality-1} gives:
\[
\begin{aligned}
|S_2| &\leq \left\lfloor \frac{\delta}{q} \right\rfloor, \\
|S_3| &\leq \left\lfloor \frac{|S_2|}{q} \right\rfloor \leq \left\lfloor \frac{\delta}{q^2} \right\rfloor, \\
&\vdots \\
|S_k| &\leq \left\lfloor \frac{\delta}{q^{k-1}} \right\rfloor.
\end{aligned}
\]

Therefore, by Lemma \ref{partition} we obtain that
\[
n = \sum_{i=1}^k |S_i| \leq \delta + \left\lfloor \frac{\delta}{q} \right\rfloor + \cdots + \left\lfloor \frac{\delta}{q^{k-1}} \right\rfloor = \sum_{i=0}^{k-1} \left\lfloor \frac{\delta}{q^i} \right\rfloor.
\]

This completes the proof.

\section{Applications}
As shown in Theorem \ref{mainresult}, Equation (\ref{mainequation}) is stronger than the original antiGriesmer inequality for a projective linear anticode in \cite{CX}
because it holds for all $n$, not just $n<q^{k-1}$
and requires only $d(\mathcal{C}^\bot)\geq 2$.
The rest of this paper derives new consequences of the improved inequality.
\subsection{Some bounds mixing $n,k,\delta$}
By the main result (Theorem \ref{mainresult}), we now can give a lower bound on the diameter $\delta$ of linear anticode.
\begin{Corollary}\label{lowbound1}
Let $\mathcal{C}$ be an $[n,k,\delta]_q$ linear anticode
with  $d(\mathcal{C}^\bot)\geq 2$. Then
\begin{equation}\label{lowbound2}
\delta\geq \left\lceil \frac{n}{1+\frac{1}{q}+\cdots+\frac{1}{q^{k-1}}}\right\rceil.
\end{equation}
That is,
\begin{equation}
\delta\geq \left\lceil \frac{nq^{k-1}(q-1)}{q^{k}-1}\right\rceil.
\end{equation}
\end{Corollary}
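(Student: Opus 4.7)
The plan is to derive this corollary as a direct consequence of Theorem \ref{mainresult} by dropping the floor functions and inverting the resulting inequality. Since the hypothesis $d(\mathcal{C}^\perp) \geq 2$ is precisely the hypothesis of Theorem \ref{mainresult}, I may invoke
\[
n \leq \sum_{i=0}^{k-1} \left\lfloor \frac{\delta}{q^i} \right\rfloor
\]
immediately as the starting point.

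First, I would bound each floor from above by the real number itself, obtaining
\[
n \leq \sum_{i=0}^{k-1} \frac{\delta}{q^i} = \delta \sum_{i=0}^{k-1} \frac{1}{q^i}.
\]
Rearranging gives
\[
\delta \geq \frac{n}{\,1 + \tfrac{1}{q} + \cdots + \tfrac{1}{q^{k-1}}\,},
\]
which already yields the first displayed inequality once we remember that $\delta$ is an integer, so it must dominate the ceiling of the right-hand side.

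Second, I would convert this to the closed form in the second displayed inequality. Summing the finite geometric series gives
\[
1 + \tfrac{1}{q} + \cdots + \tfrac{1}{q^{k-1}} = \frac{q^{k}-1}{q^{k-1}(q-1)},
\]
so the previous bound becomes
\[
\delta \geq \frac{n\,q^{k-1}(q-1)}{q^{k}-1},
\]
and taking ceilings on the right-hand side (permissible because $\delta \in \mathbb{Z}$) produces the stated form.

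There is essentially no obstacle in this argument; the only point requiring a word of care is the replacement of $\lfloor \delta/q^i\rfloor$ by $\delta/q^i$, which is a loose step but still correct, and the final invocation of the integrality of $\delta$ to pass from a real lower bound to a ceiling. I would state these two observations explicitly so the reader sees why the ceiling is legitimate, and otherwise present the computation as a short chain of inequalities culminating in both displayed forms.
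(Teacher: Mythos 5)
Your proposal is correct and follows essentially the same route as the paper: invoke Theorem \ref{mainresult}, drop the floors, rearrange to isolate $\delta$, use integrality of $\delta$ to take the ceiling, and sum the geometric series to obtain the closed form. Your explicit remarks on why the two relaxation/ceiling steps are legitimate are a slight (and welcome) refinement of the paper's presentation, but the argument is the same.
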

\begin{proof}
By Equation (\ref{mainequation}) we have that
$$n\leq \sum_{i=0}^{k-1}\frac{\delta}{q^i}.$$
that is,
$$\delta \geq \frac{n}{1+\frac{1}{q}+\cdots+\frac{1}{q^{k-1}}}.$$
Hence,
$$\delta \geq \left\lceil\frac{n}{1+\frac{1}{q}+\cdots+\frac{1}{q^{k-1}}}\right\rceil
=\left\lceil\frac{n(1-\frac{1}{q})}{(1-\frac{1}{q})(1+\frac{1}{q}+\cdots+\frac{1}{q^{k-1}})}\right\rceil
=\left\lceil \frac{nq^{k-1}(q-1)}{q^{k}-1}\right\rceil.$$
\end{proof}

From the inequality as follows:
$$\frac{nq^{k-1}(q-1)}{q^{k}-1}=\frac{nq^k(1-\frac{1}{q})}{q^k-1}
=\frac{q^k}{q^k-1}\cdot(1-\frac{1}{q})n
>(1-\frac{1}{q})n,
$$
we have
$$\left\lceil\frac{nq^{k-1}(q-1)}{q^{k}-1}\right\rceil>(1-\frac{1}{q})n.$$
This yields the following result, as shown in \cite[Corollary 2.1]{CX}.

\begin{Corollary}(\cite[Corollary 2.1]{CX})
Let $q$ be a prime power and $n$ be a positive integer satisfying $n<q^{k-1}$.
Let $\mathcal{C}\subset \mathbb{F}_q^n$ be a projective linear code of dimension $k$.
Then its maximum weight (diameter) is at least $(1-\frac{1}{q})n$.
\end{Corollary}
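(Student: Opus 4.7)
The plan is to derive the final corollary as an immediate consequence of Corollary \ref{lowbound1}, together with the elementary inequality already displayed in the excerpt just above the corollary statement. Since the corollary assumes $\mathcal{C}$ is projective, i.e.\ $d(\mathcal{C}^\perp)\geq 3$, in particular $d(\mathcal{C}^\perp)\geq 2$, so the hypothesis of Corollary \ref{lowbound1} is met and I may apply that bound directly.

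First, I would invoke Corollary \ref{lowbound1} to obtain
\[
\delta\;\geq\;\left\lceil\frac{nq^{k-1}(q-1)}{q^{k}-1}\right\rceil.
\]
Next, I would rewrite the argument of the ceiling by pulling out a factor of $(1-1/q)n$:
\[
\frac{nq^{k-1}(q-1)}{q^{k}-1}
\;=\;\frac{nq^{k}(1-\tfrac{1}{q})}{q^{k}-1}
\;=\;\frac{q^{k}}{q^{k}-1}\cdot\left(1-\frac{1}{q}\right)n,
\]
which is exactly the identity appearing in the paragraph preceding the corollary. Since $\frac{q^{k}}{q^{k}-1}>1$, this quantity strictly exceeds $(1-\tfrac{1}{q})n$, and hence so does its ceiling. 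Chaining the two inequalities yields $\delta\geq(1-\tfrac{1}{q})n$, as required.

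There is essentially no genuine obstacle here; the only subtlety worth flagging is that the hypothesis $n<q^{k-1}$ carried over verbatim from \cite[Corollary~2.1]{CX} is in fact no longer necessary when the conclusion is obtained via Theorem \ref{mainresult}, so the present argument actually proves a slightly stronger statement than the one formally written. Accordingly, my proof will neither invoke nor require the length restriction $n<q^{k-1}$; I will simply remark in passing that this assumption is inherited from the earlier formulation and may be dropped.
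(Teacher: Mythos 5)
Your proposal is correct and follows essentially the same route as the paper: the authors likewise apply Corollary \ref{lowbound1} (whose hypothesis $d(\mathcal{C}^\perp)\geq 2$ is implied by projectivity) and then use the displayed identity $\frac{nq^{k-1}(q-1)}{q^{k}-1}=\frac{q^{k}}{q^{k}-1}(1-\frac{1}{q})n>(1-\frac{1}{q})n$ to conclude, without ever using the length restriction $n<q^{k-1}$. Your remark that this restriction is inherited from \cite[Corollary~2.1]{CX} and is not needed here matches the paper's own point that the new bound removes that hypothesis.
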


Considering the binary case, we can directly deduce the following known lower bound from Corollary \ref{lowbound1}.

\begin{Corollary}(\cite{FARR1973})\label{binarybound}
For a binary linear projective anticode of dimension $k$ and diameter $\delta$,
we have
$$\delta\geq \frac{2^{k-1}n}{2^k-1}.$$
\end{Corollary}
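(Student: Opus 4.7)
The plan is to obtain this corollary as an immediate specialization of Corollary \ref{lowbound1}. Since a projective linear code by definition satisfies $d(\mathcal{C}^\perp)\geq 3 \geq 2$, the hypothesis of Corollary \ref{lowbound1} is satisfied, and no further work on the anticode structure is required.

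First I would substitute $q=2$ into the bound
$$\delta \geq \left\lceil \frac{nq^{k-1}(q-1)}{q^{k}-1}\right\rceil$$
from Corollary \ref{lowbound1}. The factor $(q-1)$ becomes $1$ and $q^{k-1}$ becomes $2^{k-1}$, so the right-hand side simplifies to $\left\lceil \frac{n\cdot 2^{k-1}}{2^{k}-1}\right\rceil$.

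Next I would drop the ceiling using the trivial inequality $\lceil x\rceil \geq x$ for every real $x$, which gives
$$\delta \;\geq\; \left\lceil \frac{n\cdot 2^{k-1}}{2^{k}-1}\right\rceil \;\geq\; \frac{2^{k-1}n}{2^{k}-1},$$
which is exactly the claimed Farrell bound. There is no real obstacle here; the only thing to verify is that the projectivity hypothesis implies the weaker dual-distance condition needed for Corollary \ref{lowbound1}, which is immediate from the definition recalled in Section 2.
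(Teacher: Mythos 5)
Your proposal is correct and follows exactly the route the paper intends: the paper derives this corollary by specializing Corollary \ref{lowbound1} to $q=2$ and dropping the ceiling, with projectivity ($d(\mathcal{C}^\perp)\geq 3$) trivially implying the hypothesis $d(\mathcal{C}^\perp)\geq 2$. Nothing is missing.
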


Equation (\ref{lowbound2}) in Corollary \ref{lowbound1} shows that for a fixed length $n$ and dimension $k$,
the diameter $\delta$ cannot be arbitrarily small. The earlier antiGriesmer bound in \cite[Theorem 2.1]{CX}
requires $n<q^{k-1}$ and $d(\mathcal{C}^\perp)\geq 3$, which already implied $\delta \geq (1-\frac{1}{q})n$, see \cite[Corollary 2.1]{CX}.
Our sharper bound removes those restrictions and yields a slightly larger lower bound for small $k$, see the examples as follows.
On the other hand, by comparing Corollary \ref{binarybound} and Corollary \ref{lowbound1},
it is found that our result focuses on more general situations without assuming projectiveness.

\begin{Example}
Let $\mathcal{C}$ be a generalized Reed-Solomon code over $\mathbb{F}_{2^8}$ with parameters $[256, 100, 157]_{2^8}$,
which satisfies the conditions that $n=256<256^{99}=q^{k-1}$ and $d(\mathcal{C}^\bot)=k+1=101\geq 3$.

Simple calculations indicate that
$$\left\lceil\frac{nq^{k-1}(q-1)}{q^{k}-1}\right\rceil=
\left\lceil\frac{256\cdot 256^{100-1}\cdot (256-1)}{256^{100}-1}\right\rceil=256
>\left\lceil(1-\frac{1}{q})n\right\rceil=\left\lceil(1-\frac{1}{256})\cdot256\right\rceil=255.$$
So the lower bound provided by Corollary \ref{lowbound1} is a slightly larger than that of \cite[Corollary 2.1]{CX}.
\end{Example}

\begin{Example}
Let $\mathcal{C}$ be an extended generalized Reed-Solomon code over $\mathbb{F}_{2^8}$ with parameters $[256, 240, 17]_{2^8}$,
and $\mathcal{C}\oplus \mathcal{C}=\{(c_1|c_2)|c_1\in \mathcal{C}, c_2\in \mathcal{C}\}$.
Then $\mathcal{C}\oplus \mathcal{C}$ has parameters $[512, 480, 17]_{2^8}$,
and satisfies that $n=512<256^{479}=q^{k-1}$.

Let $G$ be the generator matrix for $\mathcal{C}$. Then the parity check matrix for $(\mathcal{C}\oplus \mathcal{C})^\perp$ is
$$G'=\begin{pmatrix}
G & 0\\
0 & G
\end{pmatrix}.
$$
Since $d(\mathcal{C}^\perp)=241\geq 3$, any two columns of the matrix $G$ are linearly independent.
Hence any two columns of the matrix $G'$ are linearly independent,
which shows that
$d((\mathcal{C}\oplus \mathcal{C})^\bot)\geq 3$.
Simple calculations indicate that
$$\left\lceil\frac{nq^{k-1}(q-1)}{q^{k}-1}\right\rceil
=\left\lceil\frac{512\cdot 256^{479}\cdot(256-1)}{256^{480}-1}\right\rceil
=511
>\left\lceil(1-\frac{1}{q})n\right\rceil
=\left\lceil(1-\frac{1}{256})\cdot 512\right\rceil=510.$$
So the lower bound provided by Corollary \ref{lowbound1} is a slightly larger than that of \cite[Corollary 2.1]{CX}.
\end{Example}

The following example shows that the lower bound in Corollary \ref{lowbound1} is indeed strictly greater than the known one,
even if it does not meet the code length restriction and the projective property of the dual distance.

\begin{Example}
Let $\mathcal{C}$ be a binary linear code with generator matrix $G=(I_{10}|I_{10})$, where $I_{10}$ is an identity matrix of order 10.
Then $\mathcal{C}$ is a binary $[20,10,2]_2$ linear code.
Simple calculations indicate that
$$\left\lceil\frac{nq^{k-1}(q-1)}{q^{k}-1}\right\rceil=11
>\left\lceil(1-\frac{1}{q})n\right\rceil=10.$$
So the lower bound provided by Corollary \ref{lowbound1} is a slightly larger than that of \cite[Corollary 2.1]{CX}.
\end{Example}

The antiGriesmer bound also can characterize the nonexistence of linear codes with very small diameter.
\begin{Corollary}
Let $\mathcal{C}$ be an $[n,k]$ linear code over $\mathbb{F}_q$ with
 $d(\mathcal{C}^\bot)\geq 2$ and  diameter $\delta$.

(1) If the diameter $\delta$ is less than $n$,
then $\delta$ is bounded  by
$$\delta\geq q.$$
That is, if the diameter $\delta$ is less than $q$,
then $\delta$ is equal to $n$.

(2) If the diameter $\delta$ is less than or equal to $q$,
then the length $n$ is bounded  by
$$n\leq q+1.$$
\end{Corollary}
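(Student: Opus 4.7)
The plan is to derive both parts directly from the main antiGriesmer inequality $n \leq \sum_{i=0}^{k-1} \lfloor \delta/q^i \rfloor$ of Theorem \ref{mainresult}, by observing how many terms in that sum can be nonzero when $\delta$ is small relative to $q$. The basic observation I will use throughout is that $\lfloor \delta/q^i \rfloor = 0$ whenever $\delta < q^i$, so small diameters force almost all terms on the right-hand side to vanish.

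For part (1), I would argue by contrapositive: assume $\delta < q$. Then for every $i \geq 1$ we have $\delta/q^i < 1$, hence $\lfloor \delta/q^i \rfloor = 0$. Plugging into Theorem \ref{mainresult} collapses the sum to its $i=0$ term, yielding $n \leq \delta$. Combined with the trivial inequality $\delta \leq n$ (the weight of any codeword cannot exceed the block length), this forces $\delta = n$. Contrapositively, $\delta < n$ implies $\delta \geq q$, which is exactly the statement of (1).

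For part (2), assume $\delta \leq q$. Then $\lfloor \delta/q \rfloor \leq 1$ and $\lfloor \delta/q^i \rfloor = 0$ for all $i \geq 2$. Theorem \ref{mainresult} therefore yields
\[
n \leq \delta + \left\lfloor \frac{\delta}{q} \right\rfloor \leq q + 1,
\]
which is the desired bound. Note that the hypothesis $d(\mathcal{C}^\perp) \geq 2$ is what allows us to invoke Theorem \ref{mainresult} in each case; no separate argument is needed.

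The whole proof is a short direct consequence of the main theorem, so I do not anticipate any real obstacle. The only thing to be careful about is the trivial inequality $\delta \leq n$ used in part (1), which is immediate from the definition of diameter as a maximum Hamming weight on $\mathbb{F}_q^n$, and the case analysis in part (2) to confirm both possibilities $\delta < q$ and $\delta = q$ yield the same bound $n \leq q+1$.
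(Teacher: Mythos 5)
Your proposal is correct and follows essentially the same route as the paper: both derive each part by truncating the sum in Theorem \ref{mainresult} when $\delta$ is small relative to $q$, with only cosmetic differences (you phrase part (1) as a contrapositive where the paper uses contradiction, and you merge the paper's two cases $\delta<q$ and $\delta=q$ in part (2) via the single estimate $\lfloor \delta/q\rfloor\leq 1$). No gaps.
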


\begin{proof}
(1) Suppose that $\delta < q$, by Equation (\ref{mainequation}) we have
$$n\leq \delta + \left\lfloor \frac{\delta}{q} \right\rfloor +\cdots +\left\lfloor \frac{\delta}{q^{k-1}} \right\rfloor
= \delta+0+0+\cdots+0=\delta.$$
Since $\delta\leq n$, we get that $\delta=n$. This contradicts the known condition that $\delta<n$. Hence $\delta\geq q$.

(2) With $\delta < q$, we have
$$n\leq \delta + \left\lfloor \frac{\delta}{q} \right\rfloor +\cdots +\left\lfloor \frac{\delta}{q^{k-1}} \right\rfloor
= \delta+0+0+\cdots+0=\delta.$$
So $n=\delta< q$.

With $\delta = q$, we have
$$n\leq \delta + \left\lfloor \frac{\delta}{q} \right\rfloor +\cdots +\left\lfloor \frac{\delta}{q^{k-1}} \right\rfloor
= \delta+1+0+\cdots+0=\delta+1.$$
So $n\leq q+1$.
\end{proof}

In the following we consider the approximate upper bound on the length $n$ and restriction on $k$.
\begin{Corollary}
Let $\mathcal{C}$ be a linear $[n,k]_q$ code over $\mathbb{F}_q$ with
 $d(\mathcal{C}^\bot)\geq 2$ and diameter $\delta$. Then
\begin{equation}\label{upperboundofn}
n\leq
\begin{cases}
\frac{q}{q-1}\delta-1, & \mbox{if}~(q-1)|\delta;\\
\left\lfloor\frac{q}{q-1}\delta\right\rfloor,& \mbox{otherwise}.
\end{cases}
\end{equation}
In particular, for a binary code the inequality $n\leq 2\delta-1$ holds.

Moreover, given fixed $n$ and $\delta$, the dimension $k$ satisfies
\begin{equation}\label{lowerboundofk}
k\geq \log_q(\frac{\delta}{\delta-n(1-\frac{1}{q})}).
\end{equation}
\end{Corollary}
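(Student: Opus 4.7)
The plan is to derive both inequalities directly from the antiGriesmer inequality of Theorem \ref{mainresult} by replacing the floor functions with the exact fractions and summing the resulting finite geometric progression. Specifically, since $\lfloor \delta/q^i \rfloor \leq \delta/q^i$, Theorem \ref{mainresult} yields
$$n \leq \sum_{i=0}^{k-1} \frac{\delta}{q^i} = \frac{q\delta}{q-1}\bigl(1 - q^{-k}\bigr) < \frac{q\delta}{q-1}.$$
The strictness of this inequality is the workhorse for both parts of the corollary.

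For the upper bound on $n$, I would exploit that $n$ is a positive integer satisfying $n < q\delta/(q-1)$. Since $\gcd(q,q-1)=1$, the number $q\delta/(q-1)$ is an integer exactly when $(q-1)\mid\delta$. In that case an integer strictly less than $q\delta/(q-1)$ is at most $q\delta/(q-1) - 1$; otherwise it is at most $\lfloor q\delta/(q-1) \rfloor$. This splits precisely into the two cases of Equation (\ref{upperboundofn}). The binary case $q=2$ always falls into the divisibility subcase, since $q-1=1$ divides every $\delta$, so this specialization immediately gives $n \leq 2\delta - 1$.

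For the lower bound on $k$, I would rearrange the inequality $n \leq (q\delta/(q-1))(1 - q^{-k})$ by first writing it as $n(1 - 1/q) \leq \delta(1 - q^{-k})$, then as $\delta\, q^{-k} \leq \delta - n(1 - 1/q)$. Inverting both sides gives $q^k \geq \delta/(\delta - n(1-1/q))$, and taking $\log_q$ yields the bound (\ref{lowerboundofk}). The inversion step requires $\delta - n(1-1/q) > 0$, which is simply a reformulation of the strict inequality $n < q\delta/(q-1)$ already established, so the logarithm is well-defined.

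There is no real obstacle here; the entire argument is algebraic manipulation of Theorem \ref{mainresult}. The only subtlety worth highlighting is preserving strictness when passing from the finite geometric sum to $q\delta/(q-1)$: this strictness is precisely what produces the $-1$ in the divisibility case of (\ref{upperboundofn}) and what guarantees the positivity of the denominator appearing in the lower bound for $k$.
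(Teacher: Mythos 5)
Your proposal is correct and follows essentially the same route as the paper: drop the floors, sum the geometric series to get the strict bound $n<\frac{q}{q-1}\delta$, split on whether $(q-1)\mid\delta$ for the length bound, and rearrange $n\le\delta\frac{1-q^{-k}}{1-1/q}$ for the dimension bound. Your explicit justifications of the integrality case split and of the positivity of $\delta-n(1-\frac{1}{q})$ are minor additions the paper leaves implicit.
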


\begin{proof}
Dropping the floor functions in (\ref{mainequation}) only makes the right-hand side larger. Therefore
$$n\leq \sum_{i=0}^{k-1}\left\lfloor \frac{\delta}{q^i}\right\rfloor\leq \sum_{i=0}^{k-1}\frac{\delta}{q^i}<\sum_{i=0}^{\infty}\frac{\delta}{q^i}
=\delta\sum_{i=0}^{\infty}\frac{1}{q^i}=\delta\frac{1}{1-\frac{1}{q}}=\frac{q}{q-1}\delta.
$$
Hence
$$n\leq
\begin{cases}
\frac{q}{q-1}\delta-1, & \mbox{if}~(q-1)|\delta;\\
\left\lfloor\frac{q}{q-1}\delta\right\rfloor,& \mbox{otherwise}.
\end{cases}
$$

As argued above,
$$n\leq \delta(1+\frac{1}{q}+\cdots+\frac{1}{q^{k-1}})=\delta\frac{1-\frac{1}{q^k}}{1-\frac{1}{q}}.$$
Rearranging gives
$$\frac{n}{\delta}(1-\frac{1}{q})\leq 1-\frac{1}{q^k},$$
or
$$q^{-k}=\frac{1}{q^k}\leq 1-\frac{n}{\delta}(1-\frac{1}{q}).$$
Taking logarithms base $q$ yields
$$-k\leq \log_q(1-\frac{n}{\delta}(1-\frac{1}{q})).$$
Solving for $k$ gives
$$k\geq \log_q(\frac{\delta}{\delta-n(1-\frac{1}{q})}).$$
\end{proof}
\begin{Example}
Let $\mathcal{C}$ be the simplex $[2^{k-1}-1, k-1, 2^{k-2}]_2$ code as projective linear anticode with diameter $\delta= 2^{k-2}$.
In that case, $n=2\delta-1$, which makes equal sign of the first inequality of (\ref{upperboundofn}) hold true.

Let $\mathcal{C}$ be $[2q+2, 4, q]_q$ linear code over $\mathbb{F}_q$ with diameter $\delta=2q$ (see \cite[Table 1]{CX}).
In the case when $q>3$, it satisfies
$$\left\lfloor \frac{q}{q-1}\delta\right\rfloor=\left\lfloor \frac{2q^2}{q-1}\right\rfloor
=\left\lfloor \frac{2(q^2-1)+2}{q-1}\right\rfloor=2q+2+\left\lfloor \frac{2}{q-1}\right\rfloor=2q+2=n.$$
which makes equal sign of the second inequality of (\ref{upperboundofn}) hold true.
\end{Example}

\begin{Example}
Let $\mathcal{C}$ be $[\frac{q^k-1}{q-1}, k, q^{k-1}-1]_q$ linear code over $\mathbb{F}_q$
with diameter $\delta=q^{k-1}$ (see \cite[Table 1]{CX}). Simple calculations indicate that
$$\log_q(\frac{\delta}{\delta-n(1-\frac{1}{q})})=\log_q(\frac{q^{k-1}}{q^{k-1}-\frac{q^k-1}{q-1}(1-\frac{1}{q})})
=\log_q(\frac{q^{k-1}}{q^{k-1}-\frac{q^k-1}{q}})=k.
$$
This shows that the equal sign of the inequality of (\ref{lowerboundofk}) hold true.
\end{Example}

\subsection{Some bounds mixing $n,k,w,\delta$}
In this subsection, we establish an upper bound on the length of a linear anticode over $\mathbb{F}_q$
with a prescribed dimension $k$, diameter $\delta$ and  a known $w$-weight.

\begin{Corollary}\label{thm:cf}
Let $\mathcal{C}$ be an $[n,k]_q$ $q$-ary linear code with diameter $\delta$ and $d(\mathcal{C}^\perp)\ge 2$.
If $\mathcal{C}$ has a nonzero codeword of weight $w$, then
\begin{equation}\label{eq:n-ceiling-free}
n \;\le\; w + \frac{q\bigl(1-q^{-(k-1)}\bigr)}{q-1}\,\delta.
\end{equation}
Equivalently,
\[
n \;\le\; w + \Bigl\lfloor \sum_{i=0}^{k-2}\frac{\delta}{q^i}\Bigr\rfloor.
\]
\end{Corollary}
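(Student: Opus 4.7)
The plan is to rerun the greedy construction from the proof of Theorem~\ref{mainresult}, but to \emph{seed} it with the functional corresponding to the prescribed weight-$w$ codeword rather than a diameter-achieving one. If $\mathbf{x} \in \mathbb{F}_q^k$ produces a codeword $\mathbf{x}G$ of weight $w$, I would take $\mathbf{a}_1(\alpha) = \langle \mathbf{x}, \alpha\rangle$, so that $S_1 = \{j \in [n] : \mathbf{a}_1(\mathbf{g}_j) \neq 0\}$ has size exactly $w$. For $i \ge 2$ I would then define $\mathbf{a}_i$, $S_i$ and $R_i$ exactly as in the original algorithm, choosing $\mathbf{a}_i$ to maximize $|\{j \in R_{i-1} : \mathbf{a}_i(\mathbf{g}_j) \neq 0\}|$.

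Since $w > 0$ the seed $\mathbf{a}_1$ is nonzero, so the induction in Lemma~\ref{linearlyindependent} together with Lemma~\ref{partition} still produces the disjoint decomposition $[n] = S_1 \cup S_2 \cup \cdots \cup S_k$; hence $n = w + \sum_{i=2}^{k}|S_i|$.

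To control the tail I would treat $i=2$ separately, using the trivial inequality $|S_2| \le \delta$ that comes straight from the definition of the diameter (any nonzero functional yields a codeword of weight at most $\delta$). For each $i \ge 3$, the hypothesis required by Lemma~\ref{inequality-1} is now satisfied because $\mathbf{a}_{i-1}$ (with $i-1 \ge 2$) is a genuine maximizer on $R_{i-2}$; iterating its conclusion against the base case $|S_2| \le \delta$ gives $|S_i| \le \lfloor \delta/q^{i-2} \rfloor$ for every $i \ge 2$, whence
\[
n \;\le\; w + \sum_{j=0}^{k-2} \left\lfloor \frac{\delta}{q^j} \right\rfloor.
\]
Discarding the floors and summing the geometric series delivers $n \le w + \frac{q(1-q^{-(k-1)})}{q-1}\,\delta$; since $n-w$ is an integer and the right-hand side equals $\sum_{j=0}^{k-2} \delta/q^j$, this is equivalent to $n \le w + \bigl\lfloor \sum_{j=0}^{k-2} \delta/q^j \bigr\rfloor$.

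The only subtle point — and the only real obstacle — is explaining why Lemma~\ref{inequality-1} may be invoked for $i \ge 3$ but not for $i=2$ after the seed has been changed. Its proof, specifically the averaging step producing~(\ref{inequality-2}), relies on $\mathbf{a}_{i-1}$ maximizing $|\{j \in R_{i-2} : \mathbf{a}(\mathbf{g}_j) \neq 0\}|$, and that property is forfeited only for $\mathbf{a}_1$ (which we deliberately set to the weight-$w$ functional) while persisting for every $i-1 \ge 2$. Thus one loses nothing for $i \ge 3$, and the trivial bound $|S_2| \le \delta$ exactly absorbs the lost term.
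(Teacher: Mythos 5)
Your proof is correct, but it takes a genuinely different route from the paper's. The paper proves this corollary by a reduction: it forms the residual code $\mathrm{Res}(\mathcal{C},c)$ obtained by puncturing $\mathcal{C}$ on the support of the weight-$w$ codeword $c$, observes that this code has length $n-w$, dimension at most $k-1$, diameter $\delta'\le\delta$, and still satisfies $d(\mathrm{Res}(\mathcal{C},c)^\perp)\ge 2$, and then applies Theorem~\ref{mainresult} to it to get $n-w\le\sum_{i=0}^{k-2}\lfloor\delta/q^i\rfloor$. You instead re-enter the greedy machinery of Section~3 with the seed $\mathbf{a}_1$ forced to be the functional of the weight-$w$ codeword, and you correctly isolate the one place where this change matters: Lemma~\ref{inequality-1} for $i=2$ needs $\mathbf{a}_1$ to be a maximizer, which it no longer is, so you replace that step by the trivial bound $|S_2|\le\mathrm{wt}(\mathbf{a}_2G)\le\delta$ and then iterate Lemma~\ref{inequality-1} from $i=3$ onward, where the maximality of $\mathbf{a}_{i-1}$ is intact. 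Lemmas~\ref{linearlyindependent} and~\ref{partition} indeed survive the reseeding since their proofs use only $\mathbf{a}_1\neq 0$ and the maximality of $\mathbf{a}_i$ for $i\ge 2$, so $n=w+\sum_{i=2}^k|S_i|\le w+\sum_{j=0}^{k-2}\lfloor\delta/q^j\rfloor$ follows, which matches the paper's intermediate inequality exactly. What each approach buys: the paper's reduction is shorter once one grants the standard facts about the residual code (that its dimension drops to at most $k-1$ and that no column of its generator matrix becomes zero), whereas your argument is self-contained within the Section~3 framework and makes visible precisely which hypothesis of Lemma~\ref{inequality-1} is being sacrificed and how cheaply it is recovered.
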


\begin{proof}
Let $\mathrm{Res}(\mathcal{C},c)$ be the residual code after puncturing on the support $I$ of $c$.
Then $\mathrm{Res}(\mathcal{C},c)$ has length $n-w$ and dimension $k-1$.
Since $d(\mathcal{C}^\perp)\ge 2$, we have that $d(\mathrm{Res}(\mathcal{C},c))^\bot\geq 2$.
By Theorem~\ref{mainresult} applied in the residual code $\mathrm{Res}(\mathcal{C},c)$,
$$
n-w\leq \sum_{i=0}^{k-2}\left\lfloor\frac{\delta'}{q^i}\right\rfloor.
$$
Since puncturing cannot increase weight, $\delta'\le \delta$ and so
\[
n-w\leq \sum_{i=0}^{k-2}\left\lfloor\frac{\delta}{q^i}\right\rfloor
\leq \left\lfloor\sum_{i=0}^{k-2}\frac{\delta}{q^i}\right\rfloor
\leq \left\lfloor\frac{q^{k-1}-1}{q^{k-2}(q-1)}\right\rfloor
\leq \frac{q^{k-1}-1}{q^{k-2}(q-1)}\,\delta
= \sum_{i=0}^{k-2}\frac{\delta}{q^i}
= \frac{q\bigl(1-q^{-(k-1)}\bigr)}{q-1}\,\delta,
\]

Rearranging gives
\[
n\;\le\; w+\frac{q^{k-1}-1}{q^{k-2}(q-1)}\,\delta
= w+\sum_{i=0}^{k-2}\frac{\delta}{q^i}
= w+\frac{q\bigl(1-q^{-(k-1)}\bigr)}{q-1}\,\delta,
\]
and~\eqref{eq:n-ceiling-free} follows. This completes the proof.
\end{proof}

Applying Corollary \ref{thm:cf} to a codeword of minimum weight we obtain the following.

\begin{Corollary}\label{cor:w-d-lb}
Under the hypotheses of Corollary~\ref{thm:cf}, we have
\begin{equation}
n \;\le\; d + \frac{q\bigl(1-q^{-(k-1)}\bigr)}{q-1}\,\delta.
\end{equation}
Equivalently,
\[
n \;\le\; d + \Bigl\lfloor \sum_{i=0}^{k-2}\frac{\delta}{q^i}\Bigr\rfloor.
\]
\end{Corollary}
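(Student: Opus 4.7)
The plan is to derive Corollary \ref{cor:w-d-lb} as a direct specialization of Corollary \ref{thm:cf}. Since $\mathcal{C}$ is a linear $[n,k]_q$ code (with $k\ge 1$, else the statement is vacuous), by definition of the minimum distance there exists a nonzero codeword $\mathbf{c}\in\mathcal{C}$ whose Hamming weight equals $d = d(\mathcal{C})$. I would simply invoke Corollary \ref{thm:cf} with this choice, that is, with $w = d$.

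Concretely, the first step is to note that the hypotheses of Corollary \ref{thm:cf} are already assumed, namely that $\mathcal{C}$ is an $[n,k]_q$ linear code with diameter $\delta$ and $d(\mathcal{C}^{\perp})\ge 2$. The second step is to exhibit a codeword of weight $w := d$, which exists by definition of the minimum distance. Substituting $w = d$ into the inequality
\[
n \;\le\; w + \frac{q\bigl(1-q^{-(k-1)}\bigr)}{q-1}\,\delta
\]
of Corollary \ref{thm:cf} yields
\[
n \;\le\; d + \frac{q\bigl(1-q^{-(k-1)}\bigr)}{q-1}\,\delta,
\]
which is precisely the first form claimed. The equivalent floor-form
\[
n \;\le\; d + \Bigl\lfloor \sum_{i=0}^{k-2}\frac{\delta}{q^i}\Bigr\rfloor
\]
follows in the same way from the alternative expression already recorded in Corollary \ref{thm:cf}.

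There is essentially no obstacle: the corollary is a straightforward specialization, and the only minor point worth mentioning in the write-up is that the existence of a nonzero codeword of weight exactly $d$ requires $k\ge 1$, which is implicit in the setting of Corollary \ref{thm:cf}. If one wished to emphasize sharpness, one could remark that equality in Corollary \ref{cor:w-d-lb} forces equality in Theorem \ref{mainresult} applied to the residual code on the support of a minimum-weight codeword, but this is not needed for the proof itself.
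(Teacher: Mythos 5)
Your proposal is correct and matches the paper's own argument, which likewise obtains this corollary by applying Corollary~\ref{thm:cf} to a codeword of minimum weight $d$. No further comment is needed.
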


\section*{Data availability}

No data was used for the research described in the article.

\end{document}